\providecommand{\doi}[1]{
  \begingroup
    \let\bibinfo\@secondoftwo
    \urlstyle{rm}
    \href{http://dx.doi.org/#1}{
      doi:\discretionary{}{}{}
      \nolinkurl{#1}
    }
  \endgroup
}
\newtheorem{lemma}{Lemma}[section]
\newtheorem{theorem}[lemma]{Theorem}
\newtheorem{proposition}[lemma]{Proposition}
\newcommand{\R}{\ensuremath{\mathds{R}}}
\newcommand{\Obj}{f}
\newcommand{\bz}{\bar{z}}
\newcommand{\hN}{\hat{N}}
\newcommand{\hz}{\hat{z}}
\newcommand{\BM}{z^{\text{r}}}
\newcommand{\Bm}{0}
\newcommand{\Bzero}{0}
\newcommand{\Bone}{1}
\newcommand{\dummy}{\hat{z}}
\newcommand{\lub}{local upper bound}
\newcommand{\ubs}{upper bound set}
\newcounter{piccnt}
\newcounter{legendcnt}
\begin{document}

\title{A Box Decomposition Algorithm\\ to Compute the Hypervolume Indicator}
\author[1]{Renaud Lacour\footnote{Corresponding author.}}
\author[1]{Kathrin Klamroth}
\affil[1]{Department of Mathematics and Computer Science, 

University of Wuppertal, Germany

\texttt{\{lacour,klamroth\}@math.uni-wuppertal.de}}
\author[2]{Carlos M. Fonseca}
\affil[2]{CISUC, Department of Informatics Engineering, University of Coimbra, Portugal\\
\texttt{cmfonsec@dei.uc.pt}}
\maketitle

\begin{abstract}
We propose a new approach to the computation of the hypervolume indicator,
based on partitioning the dominated region into a set of axis-parallel hyperrectangles or boxes.
We present a nonincremental algorithm and an incremental algorithm, 
which allows insertions of points,
whose time complexities are $O(n^{\lfloor \frac{p-1}{2} \rfloor+1})$
and $O(n^{\lfloor \frac{p}{2} \rfloor+1})$, respectively.
While the theoretical complexity of such a method is lower bounded
by the complexity of the partition, which is, 
in the worst-case, larger than the best upper bound on 
the complexity of the hypervolume computation,
we show that it is practically efficient.
In particular, the nonincremental algorithm competes with the currently
most practically efficient algorithms.
Finally, we prove an enhanced upper bound of $O(n^{p-1})$
and a lower bound of $\Omega (n^{\lfloor \frac{p}{2}\rfloor} \log n )$
for $p \geq 4$
on the worst-case complexity of the WFG algorithm.

\end{abstract}
% 
% \begin{keyword}
%  Hypervolume indicator \end{keyword}
% 
% \end{frontmatter}
% 
% 
% \section*{Highlights}
% 
% \begin{itemize}
%   \item We propose a new efficient approach for hypervolume computation 
% with relatively low worst-case complexity.
%   
%   \item This approach relies on partitioning the dominated region into axis-parallel hyperrectangles or boxes.
%   
%   \item We provide a refined analysis of the complexity of the WFG algorithm.
% \end{itemize}

\section{Introduction}

In multi-objective optimization (MOO), since objective functions are often conflicting in practice, 
there is typically no single solution that simultaneously optimizes all objectives.
Instead there are several efficient solutions, 
i.e.\  that cannot be improved on one objective without degrading at least another objective.
Due to the possible large number of efficient solutions or even nondominated points 
-- their images in the objective space --
approximation algorithms are often favored in practice.
These algorithms are able to generate several discrete approximations or representations 
of the nondominated set and quality indicators are required to compare such approximations.
Among them is the hypervolume indicator or Lebesgue measure \citep[see e.g.][]{ZitThi98}, 
which measures the volume of the part of the objective space dominated by the points
of the approximation and bounded by some reference point. 

Practically efficient algorithms to approximate the nondominated set include in particular
evolutionary multi-objective (EMO) algorithms.
Most often the hypervolume indicator is used within EMO algorithms 
to compare the quality 
of the computed approximations \citep{BeuNauEmm07, WagBeuNau07}.
Because these algorithms repeatedly generate
approximations of large cardinality there is a clear need for efficient algorithms 
to compute the hypervolume indicator.
It has been shown \citep{BriFri10} that, under ``${\rm P} \neq {\rm NP}$'', 
there is no algorithm to compute the hypervolume indicator in time polynomial 
in the number of objectives. 
Therefore, we will assume throughout the paper that the number of objectives, 
although arbitrary, is fixed.

In this paper, we are interested in the efficient computation 
of the hypervolume associated to a given set of points.
We make a difference between the nonincremental case
and the incremental case.
In the nonincremental case, 
the whole set of points has, in general, to be known in advance
because the computation approach imposes an order on the points to be processed.
In the incremental case, the hypervolume is updated each time a new point is considered
without restriction on the order new points come up.

\subsection{Terminology and notations}

We consider a minimization problem formulated as follows:
\begin{equation}\label{eq:MOP}
\begin{array}{ll} 
\min        & \Obj(x)=(\Obj_1(x),\dots,\Obj_p(x)) \\
\mbox{s.t.} & x \in X
\end{array}
\end{equation}
where $X \neq \emptyset$ is the feasible set and 
$f_1, \dots, f_p$ are $p \geq 2$ objective functions mapping from $X$ to $\R$.
We assume that all feasible points $f(x):x\in X$ are located 
in some open hyperrectangle of $\R^p$,
namely $Z = (\Bzero, \BM)$.
In order to compare points of the objective space, 
we define the following binary relations. For $z^1, z^2 \in \R^{p}$:
$$
\begin{array}{cccc}
z^1 \leqq z^2 & \text{($z^1$ weakly dominates $z^2$)} & \Leftrightarrow & z^1_j \leq z^2_j, \quad \
  j=1,\dots,p,\\
z^1 \leq z^2  & \text{($z^1$ dominates $z^2$)} & \Leftrightarrow & z^1 \leqq z^2 
\quad \mbox{and} \quad z^1 \neq z^2, \\
z^1 < z^2     & \text{($z^1$ strictly dominates $z^2$)} & \Leftrightarrow & z^1_{j} < z^2_{j}, \quad  \ j
= 1,\dots, p.
\end{array}
$$
For any set $N$ of points of $\R^p$, we define the dominated region as
$$D(N) = \{z'\in \R^p: z \leqq z' \leqq \BM,\text{ for some $z\in N$}\}$$
where $\BM$ is used as a \emph{reference point}.
Considering the set 
$$N_{\rm nd}=\{z\in N: z'\not\leq z,\text{ for all $z'\in N$}\}$$
of all nondominated points of $N$, 
we have $D(N) = D(N_{\rm nd})$. 
Therefore, we assume in the remainder that $N$ is a \emph{stable} set of points 
for the dominance relation,
i.e.\ for all $z^1, z^2 \in N$, $z^1 \not \leq z^2$.

We denote by $V(N)$ the volume of the polytope $D(N)$ 
which is also referred to as the hypervolume 
associated to $N$.

For any $z\in\R^p$, we let $z_{-j}$ be the $(p-1)$-dimensional
vector of all components of $z$ excluding component~$j$, for a given
$j\in\{1,\dots,p\}$. 
Finally, for any $z,a\in\R^{p}$ and any $j\in\{1,\dots,p\}$, $(z_{j},a_{-j})$
denotes the vector $(a_{1},\dots,a_{j-1},z_{j},a_{j+1},\dots,a_{p})$.

\subsection{Literature review on the computation of the hypervolume indicator}\label{sec:prev_algo}

The currently most efficient algorithm for the computation of the hypervolume indicator in the case $p\geq 4$
in terms of theoretical worst-case complexity 
is by \citet{Cha13}.
For the computation of the dominated hypervolume of $n$ $p$-dimensional points,
his algorithm runs in $O(n^{\frac{p}{3}}{\rm polylog}(n))$ time. 
To our knowlege, there is currently, however, no available implementation of this approach
and no evidence of its practical efficiency.
The complexity of computing the hypervolume indicator is $\Theta(n\log n)$
\citep[see][]{BeuFonLopPaqVah09}.

On the practical point of view, the ``HV4D'' algorithm of \citet{GueFonEmm12},
specialized for the case $p = 4$, 
is the most efficient in this case 
\citep[see e.g. the computational results of][]{RusFra14, NowMaeIzz14}.
Their algorithm achieves $O(n^2)$ time complexity.
Above $p = 4$, the Walking Fish Group (WFG) \citep{WhiBraBar12} 
and Quick Hypervolume (QHV) \citep{RusFra14} algorithms are currently
two of the most efficient according to the computational experiments 
conducted by the authors.
The best upper bounds on their complexity are, however, in both cases exponential.

\subsection{Goals and outline}

In this paper, we propose to compute the hypervolume indicator
by partitioning the dominated region into hyperrectangles. 
Given that this partition is determined by a set of corner points or local upper bounds of the dominated region,
we present approaches to compute these points.
Specifically, we propose a nonincremental approach
which requires that the points for which the hypervolume is computed
are sorted with respect to one of the objective functions,
and an incremental approach which relaxes this assumption
at some extra cost.
Both approaches have a good worst-case complexity and perform very well in practice.
We also provide new insight into the theoretical complexity of the WFG algorithm.

The remainder of this paper is organized as follows.
In Section~\ref{sec:the_approach}, we present the proposed approach.
Section~\ref{sec:wfg-complexity} provides an enhanced analysis
of the complexity of the WFG algorithm.
Section~\ref{sec:expe} describes computational experiments 
conducted to compare the proposed algorithms to the state-of-the-art and presents the results.
Section~\ref{sec:concl} concludes the paper.

\section{The Hypervolume Box Decomposition Algorithm}\label{sec:the_approach}

This section describes the proposed new approach to compute the hypervolume indicator:
the Hypervolume Box Decomposition Algorithm (HBDA).
Section~\ref{sub:partition} defines the decomposition scheme.
Section~\ref{sub:ubs} presents two algorithms to compute an auxiliary set of points, 
referred to as an upper bound set, that is necessary to obtain the decomposition.
In Section~\ref{sub:time-complex}, the worst-case time complexity
of HBDA is discussed.
The general position assumption that is made in the first two sections is relaxed in Section~\ref{sub:NGP}.
Section~\ref{sub:impl} is concerned with the implementation of HBDA.

\subsection{Partitioning the dominated region into disjoint hyperrectangles}\label{sub:partition}

The dominated region $D(N)$ is a union of hyperrectangles 
of the type $[z, \BM]$ where $z \in N$. 
The idea of the proposed approach is to rely on another description of $D(N)$
as a union of pairwise disjoint hyperrectangles
so that the hypervolume can be computed as the sum of their volumes.
We illustrate the concepts of this section on a bi-objective instance represented 
in Figure~\ref{fig:decomp}.

\begin{figure}
  \begin{center}
    \includegraphics{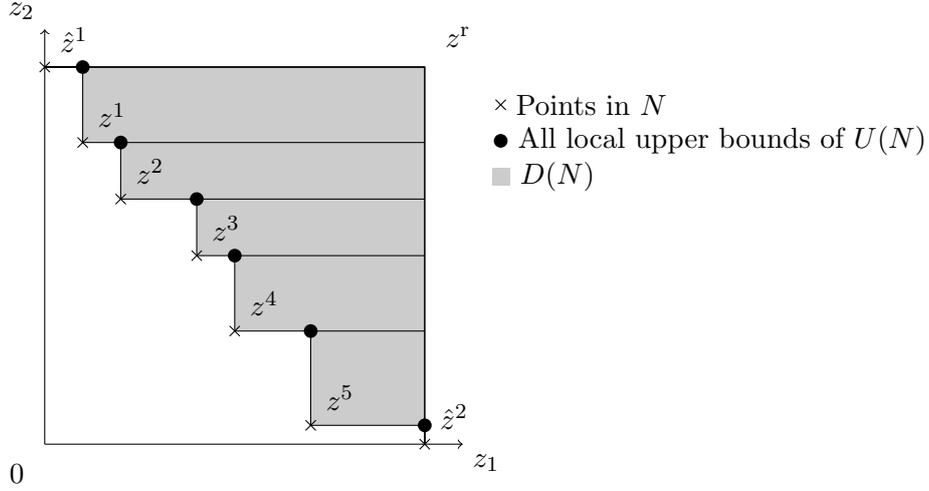}\stepcounter{piccnt}
  \end{center}
  
  \caption{Decomposition of the dominated region in the bi-objective case
  \label{fig:decomp}}
\end{figure}

In \citet[Section 3]{KapRubShaVer08}, a decomposition of $D(N)$ into a set of pairwise disjoint hyperrectangles
is established.
Their approach is based on the computation of a set $U(N)$ of auxiliary points
associated to $N$, which are identical to \emph{local nadir points} in the bi-objective case,
or to \emph{local upper bounds} in the general multi-objective case \citep[see][]{KlaLacVan15}.
The set $U(N)$ is defined as the set of all the points $u \in [\Bzero, \BM]$ that satisfy
the two following properties:
\begin{itemize}[itemindent=20pt,labelsep=10pt,itemsep=\medskipamount]
      \item[($P_1$)] $[\Bzero, u)$ does not contain any point of $N$.
  \item[($P_2$)] $u$ is maximal for property ($P_1$), i.e.\  for any 
  $u' \in  [\Bzero, \BM]$ such that $u \leq u'$, $u'$ does not satisfy ($P_1$).
\end{itemize}
Following \citet{KlaLacVan15}, we denote by \emph{\lub{}s} the elements of $U(N)$
and by \emph{\ubs{}} the set $U(N)$ itself.

We first make a simplifying ``general position'' assumption that will be relaxed later.
Under this assumption, no two distinct points,
among the points of $N\subset Z$ to be considered,
share the same value in any dimension.
We also define additional dummy points $\hz^1, \dots, \hz^p$
where $\hz^j = (\BM_j,\Bm_{-j})$ for all $j\in \{1, \dots, p\}$.
These points are all the points of $[\Bzero, \BM]$
that (1) are not dominated by and do not dominate any point of $Z$
 and (2) are minimal with respect to the dominance relation $\leq$ for (1).

Let $\hN = N \cup \{\hz^1, \dots, \hz^p\}$.
We have $U(\hN) = U(N)$ and each vector $u \in U(N)$
is defined by $p$ distinct points $\{z^1(u), \dots, z^p(u)\}$ of $\hN$
in the sense that $z^j_j(u) = u_j$ and $z^j_{-j}(u) < u_{-j}$
for all $j\in \{1, \dots, p\}$.
We refer to \citet{KlaLacVan15} for more details on this aspect.

Then, according to 
\citet{KapRubShaVer08}, the set $\{B(u):u\in U(N)\}$ where
\begin{equation}
B(u) = [z^1_1(u), \BM_1] \times \prod_{j=2}^p \left[\max_{k<j} \{z_j^k(u)\}, u_j\right)\label{eq:Bu}
\end{equation}
is a partition of $D(N)$.
We refer to Figure~\ref{fig:decomp} for an illustration of this partition in the bi-objective case.
Therefore, the hypervolume of $D(N)$ is obtained as the sum of the hypervolumes of the $B(u)$'s,
for all $u\in U(N)$, which requires a computation time linear in $|U(N)|$.

In the next section, we discuss the computation of the set $U(N)$.

\subsection{Computing upper bound sets}\label{sub:ubs}

Several approaches exist for the computation of the set $U(N)$. 
\citet{KapRubShaVer08} propose a nonincremental algorithm
which requires that the points of $N$ are sorted 
in increasing order according to any fixed component.
Their algorithm is output-sensitive
and achieves $O(|U(N)| \log^{p-1} |N|)$ time complexity 
using dynamic range trees.
\citet{PrzGanEhr10} also provide an algorithm to compute $U(N)$
which is used to define search zones for an algorithm to compute the nondominated
set of MOCO problems.
In \citet{KlaLacVan15} we propose another algorithm which uses, 
as in \citet{DaeKla14} for the tri-objective case,
the relation between points of $U(N)$ and their defining points to avoid a filtering step with respect to Pareto dominance
found in \citet{PrzGanEhr10}.
Both \citet{PrzGanEhr10} and \citet{KlaLacVan15} propose incremental algorithms.
We note that since defining points are tracked in \citet{KapRubShaVer08} and \citet{KlaLacVan15}, the corresponding algorithms
make it directly possible to compute the hypervolume indicator using the partition described by~\eqref{eq:Bu}.

We first present the algorithm of \citet{KlaLacVan15}, 
which is used
when arbitrary insertions into $N$ are required (Section~\ref{sub:incremental}). 
Then we present the nonincremental algorithm based on both \citet{KapRubShaVer08} and a theorem of \citet{KlaLacVan15},
which is expected to be more efficient when the whole set $N$ is known in advance
(Section~\ref{sub:nonincremental}).

\subsubsection{Incremental algorithm}\label{sub:incremental}

The incremental algorithm to compute an upper bound set is presented in Algorithm~\ref{alg:incremental}.
Given a stable set $N$ and a new point $\bz$ such that $N\cup \{\bz\}$ is also a stable set,
it identifies from the upper bound set for $N$ the set $A$ of all local upper bounds
that no longer satisfy property~$(P_1)$ with respect to $\bz$ (Step~\ref{alg:incremental:A}).
From the set $A$, Step~\ref{alg:incremental:nr} generates the valid new local upper bounds 
using the result provided in Theorem~\ref{th:incremental}.

\begin{theorem}[\citealp{KlaLacVan15}]\label{th:incremental}
  Let $\bz$ be a point of $(\Bzero, \BM)$ such that $N\cup\{\bz\}$ 
  is a stable set of points in general position. 
  Consider a \lub{} $u\in U(N)$ such that $\bz < u$.
  
  Then, for any $j \in \{1,\dots,p\}$, $({\bz}_j,u_{-j})$ 
  is a \lub{} of $U({N}\cup \{\bz\})$ 
  if, and only if, $\bz_j \geq \max_{k\neq j} \{z_j^{k}(u)\}$.
\end{theorem}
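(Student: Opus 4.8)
The plan is to reduce everything to the characterization of \lub{}s through their defining points, as established in \citet{KlaLacVan15}: a point $u'\in[\Bzero,\BM]$ belongs to $U(M)$ if and only if $[\Bzero,u')$ contains no point of $M$ (property $(P_1)$) and, for each coordinate $i\in\{1,\dots,p\}$, there is a point $w\in\hat M = M\cup\{\hz^1,\dots,\hz^p\}$ with $w_i=u'_i$ and $w_{-i}<u'_{-i}$ (such a $w$ is a defining point for coordinate $i$; its existence enforces maximality $(P_2)$, since raising the $i$-th coordinate of $u'$ above $u'_i$ would bring $w$ into the corresponding open box and thereby violate $(P_1)$). Throughout I would write $M=N\cup\{\bz\}$ and $u^j=(\bz_j,u_{-j})$, and note that $\hat M=\hN\cup\{\bz\}$ and that $\bz<u$ together with $\bz\in(\Bzero,\BM)$ forces $u_i>\bz_i>\Bzero$ for every $i$, so $u$ has no vanishing coordinate.

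First I would dispose of property $(P_1)$, which I expect to hold for $u^j$ unconditionally. Since $u^j\leqq u$, the box $[\Bzero,u^j)$ is contained in $[\Bzero,u)$, which contains no point of $N$; and $\bz\notin[\Bzero,u^j)$ because its $j$-th coordinate equals $\bz_j$ and is therefore not strictly below $u^j_j=\bz_j$. Hence $u^j$ satisfies $(P_1)$ with respect to $M$ for every choice of $j$, and the whole question reduces to whether $u^j$ admits a defining point in each coordinate.

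The core of the argument is to identify these defining points coordinate by coordinate. For coordinate $j$ the point $\bz$ itself always qualifies, since $\bz_j=u^j_j$ and $\bz_{-j}<u_{-j}=u^j_{-j}$. For a coordinate $i\neq j$, any defining point $w$ must satisfy $w_i=u^j_i=u_i$; under general position there is at most one point of $\hat M$ whose $i$-th coordinate equals $u_i$, and since $z^i_i(u)=u_i$ while $\bz_i<u_i$, this unique candidate is precisely $z^i(u)$ (when $u_i=\BM_i$ the candidate is the dummy $\hz^i$, whose off-coordinates are $\Bzero$ and which is handled identically). It then remains to decide when $z^i(u)$ is a valid defining point of $u^j$, i.e.\ when $z^i_k(u)<u^j_k$ for all $k\neq i$. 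For $k\neq j$ this is automatic because $z^i_k(u)<u_k=u^j_k$; the only binding inequality is the case $k=j$, namely $z^i_j(u)<u^j_j=\bz_j$.

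Assembling these observations yields the equivalence directly: $u^j\in U(M)$ if and only if coordinate $i$ has a defining point for every $i\neq j$, i.e.\ if and only if $z^i_j(u)<\bz_j$ for all $i\neq j$, i.e.\ if and only if $\bz_j>\max_{k\neq j}z^k_j(u)$; general position makes $\bz_j\neq z^k_j(u)$, so this is the same as $\bz_j\geq\max_{k\neq j}z^k_j(u)$, as claimed. The step I expect to require the most care is the uniqueness claim that the only possible defining point of $u^j$ in coordinate $i\neq j$ is $z^i(u)$: this is exactly where general position is essential, and it is also where the dummy points $\hz^1,\dots,\hz^p$ must be checked so that coordinates of $u$ lying on the reference point are treated correctly. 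The remaining verifications are the routine coordinatewise inequalities sketched above.
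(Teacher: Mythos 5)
Your argument is correct: property $(P_1)$ for $(\bz_j,u_{-j})$ is indeed automatic, and under general position the only candidate defining point of $(\bz_j,u_{-j})$ in a coordinate $i\neq j$ is $z^i(u)$ (with the dummy $\hz^i$ covering the case $u_i=\BM_i$), so the equivalence collapses to the single binding inequality $z^i_j(u)<\bz_j$ exactly as you describe. The paper itself states this theorem without proof, importing it from \citet{KlaLacVan15}, and your argument follows the same route as that reference, with the characterization of \lub{}s by defining points --- which the paper also only states and attributes to \citet{KlaLacVan15} --- as the one external ingredient you correctly isolate.
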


To be able to use this result, it is required to keep track of the associated defining points for each local upper bound. 
This is done by setting $z^k(\BM) \leftarrow \dummy^k$ 
at Step~\ref{alg:incremental:init}
and 
$$z^k(\bz_j, u_{-j}) \leftarrow 
\begin{cases}
  \bz    & \text{if $k = j$}\\
  z^k(u) & \text{otherwise}
\end{cases}
$$
at Step~\ref{alg:incremental:nr},
for all $k \in \{1, \dots, p\}$.

\begin{algorithm}
  \begin{tabular}{llll}
    \nl\UBSI{$\emptyset$}~      & =~ & \multicolumn{2}{l}{$\{\BM\}$\nllabel{alg:incremental:init}\;}\\
    &&&\\[-8pt]
    \nl\UBSI{$N\cup \{\bz\}$}~  & =~ & \multicolumn{2}{l}{$\{(\bz_j, u_{-j}) : \bz_j \geq \max_{k\neq j} \{z_j^{k}(u)\}, u\in A, j=1, \dots, p\}$\;}\nllabel{alg:incremental:nr}\\
    \nl                         &    &  $\cup \overline{A}$ &\;\\
    \nl                         &    & where~ & $A$ = $\{u\in \text{\UBSI{$N$}}: \bz<u\}$\nllabel{alg:incremental:A}\;\\
    \nl                         &    &        & $\overline{A}$ = $\text{\UBSI{$N$}}\setminus A$\;
  \end{tabular}
  \caption{Incremental algorithm to compute an upper bound set\label{alg:incremental}}
\end{algorithm}

\subsubsection{Nonincremental algorithm}\label{sub:nonincremental}

In the case where points of $N$ are given in increasing order of some component,
say component $p$, then the computation of $U(N)$ can take advantage of this.
The approach is described in Algorithm~\ref{alg:nonincremental} and corresponds to
the algorithm of \cite[Section 3.1]{KapRubShaVer08} 
with the addition of the condition from Theorem~\ref{th:incremental} at Step~\ref{alg:nonincremental:nr}.
All local upper bounds $u$ of $A$ satisfy $u_p=\BM_p$, 
therefore all $(\bz_p, u_{-p})$ with $u \in A$ are valid new local upper bounds. 
The other new local upper bounds are obtained as in Algorithm~\ref{alg:incremental}
considering only the first $p-1$ components of~$\bz$.
In addition to the update of defining points described in Section~\ref{sub:incremental}
that also has to be performed for the nonincremental algorithm, 
we have to set $z^p(\bz_p, u_{-p}) \leftarrow \bz$ at Step~\ref{alg:nonincremental:p}.

\begin{algorithm}
  \begin{tabular}{llll}
    \nl\UBSNI{$\emptyset$}~     & =~ & \multicolumn{2}{l}{$\{\BM\}$\;}\\
    &&&\\[-8pt]
    \nl\UBSNI{$N\cup \{\bz\}$}~ & =~ & \multicolumn{2}{l}{$\{(\bz_p, u_{-p}) : u \in A\}$\nllabel{alg:nonincremental:p}\;}\\
    \nl                         &    & \multicolumn{2}{l}{$\cup \{(\bz_j, u_{-j}) : \bz_j \geq \max_{k\neq j} \{z_j^{k}(u)\}, u \in A, j=1, \dots, p-1\}$\nllabel{alg:nonincremental:nr}\;}\\ 
    \nl                         &    & \multicolumn{2}{l}{$\cup \overline{A}$\;}\\ 
    \nl                         &    & where~ & $A$ = $\{u\in \text{\UBSNI{$N$}}: \bz<u\}$\nllabel{alg:nonincremental:A}\;\\
    \nl                         &    &        & $\overline{A}$ = $\text{\UBSNI{$N$}}\setminus A$\;
  \end{tabular}
  \caption{Nonincremental algorithm to compute an upper bound set 
  -- assumes that $\bz_p > z_p$, for all $z\in N$\label{alg:nonincremental}}
\end{algorithm}

For the computation of the hypervolume indicator,
the local upper bounds of $\overline{A}$ need not be kept.
Indeed they are not modified later, which implies that the associated hyperrectangles according to~\eqref{eq:Bu}
will not change.

\bigskip

Algorithms~\ref{alg:incremental} and~\ref{alg:nonincremental}
and the decomposition of the dominated region (Section~\ref{sub:partition})
yield two Hypervolume Box Decomposition Algorithms:
an incremental version (HBDA-I) and a nonincremental version
(HBDA-NI), respectively.

\subsection{Time complexity of the algorithms}\label{sub:time-complex}

The time complexity of HBDA-I and HBDA-NI 
mainly depends on the size of the current upper bound set $U(N)$ and of the set $A$
in Algorithms~\ref{alg:incremental} and~\ref{alg:nonincremental}, respectively.
Indeed for both algorithms, a constant time is spent on each element of $A$.
Let $t_{\rm I} (n, p)$ and $t_{\rm NI} (n, p)$ be upper bounds 
on the time complexity of HBDA-I and HBDA-NI,
respectively,
applied to $n$ points of dimension $p$.
We obtain the following relations:
$$\begin{array}{rcccc}
   t_{\rm I} (n, p)  & \leq & t_{\rm I} (n-1, p)  & + & s(n-1, p)\\
   t_{\rm NI} (n, p) & \leq & t_{\rm NI} (n-1, p) & + & s(n-1, p-1)\\
  \end{array}
$$
where $s(n,p)$ is the worst-case size of an upper bound set 
on $n$ points of dimension $p$, which is equal to $\Theta(n^{\lfloor \frac{p}{2} \rfloor})$ \citep{KapRubShaVer08}.
(We recall that, in the case of HBDA-NI, only an upper bound set 
for at most $n$ $(p-1)$-dimensional points needs to be maintained.)
Therefore, we have:
\begin{align*}
   t_{\rm I} (n, p)   & =  O(n^{\lfloor \frac{p}{2} \rfloor+1})\\
   t_{\rm NI} (n, p)  & =  O(n^{\lfloor \frac{p-1}{2} \rfloor+1}).\\
\end{align*}

Note that here we assume the worst-case for $|U(N)|$
but Algorithm~\ref{alg:nonincremental} is an output-sensitive algorithm \citep{KapRubShaVer08}.

\subsection{Relaxing the simplifying ``general position'' assumption}\label{sub:NGP}

Real or generated instances may contain points that are not in general position. 
Therefore, it is important to allow, in algorithms to compute the hypervolume indicator, 
points having equal component values in the same dimension.
The partition of the dominated region presented in Section~\ref{sub:partition}
is based on the existence and uniqueness, for each local upper bound $u$,
of a $p$-uple of points that define the $p$ components of $u$.
The uniqueness in particular is guaranteed by the general position assumption
and Theorem~\ref{th:incremental} assumes general position.
We show, however, that Algorithms~\ref{alg:incremental} and~\ref{alg:nonincremental}
can be applied without any modification to non-general position instances.

\begin{proposition}
  Algorithms~\ref{alg:incremental} and~\ref{alg:nonincremental} 
  are still valid when the input points are in non-general position.
\end{proposition}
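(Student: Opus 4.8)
The plan is to reduce the non-general-position case to the general-position case by a symbolic perturbation of the input and then pass to the limit. The only places where the general position assumption is used are (i) the uniqueness of the $p$-tuple of defining points of each $u \in U(N)$, which underlies the partition~\eqref{eq:Bu}, and (ii) the strict inequality $z^j_{-j}(u) < u_{-j}$ exploited in Theorem~\ref{th:incremental}. In non-general position a coordinate of a defining point may coincide with the corresponding coordinate of $u$, so several points of $\hN$ may be eligible as $z^j(u)$, and the tie case $\bz_j = \max_{k\neq j}\{z^k_j(u)\}$ of the acceptance test in Algorithms~\ref{alg:incremental} and~\ref{alg:nonincremental} can actually occur. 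I would therefore first isolate these two effects and treat each separately.

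First I would introduce, for each small $\epsilon > 0$, a perturbed instance $N^{\epsilon}$ obtained by adding to the $j$-th coordinate of each point a distinct infinitesimal offset, chosen according to a fixed lexicographic tie-breaking order so that (a) $N^{\epsilon}$ is in general position for all sufficiently small $\epsilon$, (b) every strict inequality holding among the original coordinates is preserved, and (c) every coordinate tie is resolved in the direction that turns the algorithms' weak test $\bz_j \geq \max_{k\neq j}\{z^k_j(u)\}$ into the strict test $\bz^{\epsilon}_j > \max_{k\neq j}\{z^k_j(u^{\epsilon})\}$. With this choice, the sequence of comparisons performed by each algorithm on $N$ is identical, branch for branch, to the one performed on $N^{\epsilon}$: the sets $A$ and the accepted candidates coincide combinatorially. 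Since $N^{\epsilon}$ is in general position, the correctness already established (Theorem~\ref{th:incremental} together with the partition of Section~\ref{sub:partition}) shows that the algorithm returns $U(N^{\epsilon})$ with valid, unique defining points and that the boxes $B(u^{\epsilon})$ partition $D(N^{\epsilon})$, so their volumes sum to $V(N^{\epsilon})$.

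Next I would let $\epsilon \to 0$. Each perturbed local upper bound $u^{\epsilon}$ and each of its defining points converges to a limit, and these limits are exactly the quantities the unmodified algorithm produces on $N$; hence every point the algorithm outputs on $N$ satisfies property~$(P_1)$ and is maximal for it, i.e.\ lies in $U(N)$, and conversely $U(N)$ is covered. The multiset of boxes $\{B(u^{\epsilon})\}$ converges to $\{B(u)\}$, and because the Lebesgue measure is continuous under this convergence while degeneracies are negligible, $\sum_{u} V(B(u)) = \lim_{\epsilon\to 0} \sum_{u^{\epsilon}} V(B(u^{\epsilon})) = \lim_{\epsilon\to 0} V(N^{\epsilon}) = V(N)$, the last equality by continuity of $V$ in the point coordinates. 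Thus the hypervolume is computed correctly with no modification of the algorithms.

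The main obstacle I expect is the collapse of the combinatorial structure in the limit: two distinct local upper bounds $u^{\epsilon} \neq v^{\epsilon}$ of the general-position instance may tend to the same point of $U(N)$, so the algorithm can emit duplicates, and some boxes $B(u^{\epsilon})$ may degenerate (one of the half-open factors $[\max_{k<j}\{z^k_j(u)\}, u_j)$ becoming empty). I would handle this by showing that any such degenerate box has zero volume and therefore does not affect the sum, and that duplicated or degenerate local upper bounds, together with their tracked defining points, stay consistent across subsequent insertions, so that the invariant ``the stored points are precisely the accumulation of $U(N^{\epsilon})$, with box volumes summing to $V(N)$'' is preserved inductively. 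The most delicate bookkeeping step is verifying that the tie-breaking direction in (c) is simultaneously compatible with the strict-dominance test defining $A$ and with the acceptance test, and that it can be realized by a single consistent perturbation.
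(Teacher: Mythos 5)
Your overall strategy --- perturb to general position and transfer correctness back --- is the same one the paper uses, but your write-up stops exactly where the actual proof begins. The crux is the point you yourself defer as ``the most delicate bookkeeping step'': exhibiting one consistent perturbation (equivalently, one tie-broken total order $<_j$ per coordinate) under which simultaneously (i) every weak acceptance test $\bz_j \geq \max_{k\neq j}\{z^k_j(u)\}$ in Algorithms~\ref{alg:incremental} and~\ref{alg:nonincremental} coincides with the corresponding strict test of Theorem~\ref{th:incremental} on the perturbed data, (ii) the strict-dominance test defining $A$ is unaffected, and (iii) the maxima in \eqref{eq:Bu} are numerically unchanged. Without that verification, the assertion that the two runs coincide ``branch for branch'' is exactly the statement to be proved, not a step toward it. The paper closes this gap with a short combinatorial argument: label the points in insertion order and break each coordinate tie by index so that the comparison involving the current point resolves in its favour; since in every acceptance test the left-hand side is a coordinate of the current (newest) point, the weak comparison $\geq$ already implements the tie-broken strict comparison; the test defining $A$ needs no perturbation at all, because property ($P_1$) is stated on the true coordinates and does not rely on general position; and whichever representative attains a maximum in \eqref{eq:Bu}, the numerical value, hence the box volume, is the same (degenerate factors simply give zero-volume boxes). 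You should supply this argument explicitly --- in particular, check that the tie-breaking direction that makes the acceptance test work does not flip any membership decision for $A$.

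A secondary remark: realizing the perturbation as an actual $\epsilon$-offset followed by $\epsilon\to 0$ buys you nothing and costs you the extra obligations you list at the end (distinct $u^{\epsilon}$ merging in the limit, convergence of the multiset of boxes, continuity of the volume), which you only gesture at discharging. The paper's version is purely symbolic: the points are never moved, the algorithm's output on $N$ is literally the output of the general-position theory applied with the tie-broken orders, and the only measure-theoretic fact needed is that an empty half-open interval yields a zero-volume box. If you keep the limit formulation, the statement that duplicated or degenerate local upper bounds ``stay consistent across subsequent insertions'' needs its own inductive proof, since Theorem~\ref{th:incremental} as stated assumes general position and you would be invoking it on data that is only general-position after perturbation.
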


\begin{proof}
Comparison between component values of points appear at three different places in the algorithms,
namely 
(a) in the strict dominance tests at Steps~\ref{alg:incremental:A} and~\ref{alg:nonincremental:A} 
of  Algorithms~\ref{alg:incremental} and~\ref{alg:nonincremental}, respectively,
(b) in the application of the condition of Theorem~\ref{th:incremental} at Steps~\ref{alg:incremental:nr}
and~\ref{alg:nonincremental:nr} of  Algorithms~\ref{alg:incremental} and~\ref{alg:nonincremental}, respectively, and 
(c) in the computation of the hypervolume of a box according to \eqref{eq:Bu}.

For (a), note that the strict dominance tests at Steps~\ref{alg:incremental:A}
and~\ref{alg:nonincremental:A} 
of  Algorithms~\ref{alg:incremental} and~\ref{alg:nonincremental}, respectively,
should remain the same,
since they are related to property ($P_1$), which does not assume general position.

For (b) and (c), we follow the idea of symbolically perturbating 
the component values of the input points
as suggested in \citet{KapRubShaVer08}.
More precisely, given a set $N = \{z^1, \dots, z^n\}$ of points in non-general position, 
one can define for each component $j \in \{1, \dots, p\}$ a total order $<_j$ 
on the values of the points of $N$ on component $j$:
$$z^{i_1}_j <_j z^{i_2}_j \text{ if, and only if, } z^{i_1}_j < z^{i_2}_j 
\text{ or } (z^{i_1}_j = z^{i_2}_j \text{ and } i_1 > i_2)$$
for all $i_1, i_2 \in \{1, \dots, n\}, i_1 \neq i_2$.
This relation is obviously compatible with the natural strict ordering on real numbers, 
in the sense that if $z^{i_1}_j < z^{i_2}_j$ then $z^{i_1}_j <_j z^{i_2}_j$.
Thus we can use $<_j$, or more precisely the symmetric $>_j$ 
in place of $\geq$ to apply the condition of Theorem~\ref{th:incremental} in the non-general position case.
In fact, if we label the points of $N$ so that the current point 
always gets the largest index among the points considered so far, 
the relation $\geq$ can be used equivalently to $>_j$, 
since the left-hand side of the comparison is always a component value of the current point.

Finally, it is equivalent to use $<_j$ or $\leq$ to compute the maxima in \eqref{eq:Bu}.
\end{proof}

\subsection{Implementation and data structures}\label{sub:impl}

Computing the set $A$ in {\UBSI} and {\UBSNI}
requires to determine the subset of a set of local upper bounds that are strictly
dominated by a given point.
We expect that the size of the output subset is much smaller than the cardinality
of the input set, therefore a specialized data structure could be used instead of a simple linked list.

Options for this are range trees, $k$d-trees and generalized quadtrees \citep{deBChevanOve08}.
Generalized quadtrees become inefficient for large dimensional point sets, since the number
of children of an internal node is equal to $2^p$.
Besides, the chosen data structure needs to handle both insertions and deletions,
which is costly to achieve with range trees and $k$d-trees.

Therefore, for the incremental algorithm {\UBSI}, we still suggest 
to store local upper bounds in a linked list. 
The list is sorted in nondecreasing order of the sum of the component values 
of each local upper bound to avoid some of the dominance tests.

For the case of the nonincremental algorithm {\UBSNI},
we propose to take advantage of the information provided 
by the point set $N$, which is assumed to be known in advance.
We suggest to use a combination of a $k$d-tree and a set of linked lists.
Namely, we build a balanced $k$d-tree from the points of $N$.
Then we consider the partition of the objective space induced by this $k$d-tree.
For each new local upper bound, we identify the cell of the partition it belongs to
by traversing the tree. Instead of creating a new node for this local upper bound,
we insert it in a linked list located in place of this potential new node.
To perform Step~\ref{alg:nonincremental:A} of {\UBSNI},
local upper bounds strictly dominated by a given point $\bz$
are identified by first searching the tree with the query interval 
$(\bz, \BM)$ and then the linked lists containing local upper bounds 
in some cell intersecting $(\bz, \BM)$. 
The corresponding local upper bounds can then be removed in constant time 
without altering the tree structure.

\section{A better bound on the worst-case time complexity of the WFG algorithm}\label{sec:wfg-complexity}

In this section, we propose an improved analysis 
of the worst-case time complexity of the WFG algorithm of \citet{WhiBraBar12}
While the upper bound provided by the authors is $O(2^n)$,
we show that it
can be lowered at least to $O(n^{p-1})$, matching the upper bound of the  
Hypervolume by Slicing Objectives (HSO) of \citet{WhiHinBarHub06}.
Moreover, we show that its complexity is at least $\Omega (n^{\lfloor \frac{p}{2} \rfloor} \log n )$
for $p \geq 4$.

In Section~\ref{sub:wfg}, we briefly describe the WFG algorithm. 
Then in Section~\ref{sub:wfg_complex} we prove the new upper and lower bounds.

\subsection{Brief description of the WFG algorithm}\label{sub:wfg}

The WFG algorithm computes the hypervolume in a recursive way.
Given a stable set $N\cup\{\bz\}$ of points and a reference point $\BM$,
the volume of $D(N\cup\{\bz\})$ is computed as the sum of the volume of $D(N)$
and the \emph{exclusive hypervolume} associated to $\bz$, i.e.
$V(N\cup\{\bz\})-V(N)$.
This last quantity is equal to $V(\{\bz\})-V(N')$, 
where $N'$ is the stable subset of all orthogonal projections of the points of $N$ onto the dominance cone
$\{z\in \R^p: \bz\leqq z\}$.
This idea is illustrated with a 3-dimensional example represented in Figure~\ref{fig:wfg1}, 
where $N=\{z^1,\dots,z^7\}$ and it is assumed that 
$z_3 \leq \bz_3$, for all $z\in N$.
The basic algorithm is summarized in Algorithm~\ref{alg:wfg1}, 
where ${\rm pmax} (z, z')$ is the parallel maximum of $z$ and $z'$, 
i.e.\  ${\rm pmax} (z, z') = (\max\{z_j, z'_j\})_{j=1,\dots,p}$.

\begin{figure}
  \begin{center}
    \includegraphics{hypervolume-arxiv-figure\thepiccnt.pdf}\stepcounter{piccnt}
  \end{center}
  
  \caption{Orthogonal projection on the hyperplane $z_3=\bz_3$ of a 3-dimensional instance\label{fig:wfg1}}
\end{figure}

\begin{algorithm}  
  \begin{tabular}{lll}
    \WFG{$\emptyset$}~     &  =~ & 0\; \\
    &&\\[-8pt]
    \WFG{$\{\bz\}$}~       &  =~ & $\prod_{j=1}^p \BM_j - \bz_j$\; \\
    &&\\[-8pt]
    \WFG{$N\cup \{\bz\}$}~ &  =~ &  \WFG{$N$} + \WFG{$\{\bz\}$} $-$ \WFG{$N'$}\;\\
                            &     & where $N'=\{{\rm pmax}(\bz, z):z \in N\}_{\rm nd}$\;
  \end{tabular}\;
  \caption{The basic WFG algorithm\label{alg:wfg1}}
\end{algorithm}

If the points are sorted in non-decreasing order of some component, say component $p$, 
then, as remarked by \citet{WhiHinBarHub06},
\WFG{$N'$} is a $(p-1)$-dimensional subproblem. 
Indeed, the dominated region of $N'$, $D(N')$, 
can be described by sliding its orthogonal projection on the hyperplane
of equation $f_p = \bz_p$ along the $f_p$ axis, 
from level $\bz_p$ to level $\BM_p$.
Therefore, the hypervolume of $D(N')$ can be obtained by multiplying by $(\BM - \bz_p)$
the hypervolume of $D(N')$ projected on the hyperplane defined by $f_p = \bz_p$.
Besides, a simple algorithm is used for subproblems with $p = 2$.
The basic algorithm together with this important enhancement 
is what we refer to as WFG algorithm hereafter.

\subsection{New upper and lower bounds}\label{sub:wfg_complex}

We first show an improved upper bound in Proposition~\ref{prop:wfg-complexity}.

\begin{proposition}\label{prop:wfg-complexity}
  The worst-case time complexity of the WFG algorithm 
  in the case where the points are sorted in non-decreasing order of some arbitrary component 
  is bounded by $O(n^{p-1})$.
\end{proposition}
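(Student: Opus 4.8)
The plan is to bound the total work of the WFG recursion by counting, level by level, how many subproblems are generated and how large each can be. Because the points are sorted in non-decreasing order of component $p$, the recursive call \WFG{$N'$} is a genuine $(p-1)$-dimensional problem: the projected set $N'$ lives in the hyperplane $f_p = \bz_p$, and its dominated region is obtained by sliding this $(p-1)$-dimensional region along the $f_p$ axis. First I would set up a recurrence for the worst-case running time $T(n,p)$ on $n$ points in dimension $p$. Processing the points in order, inserting the $i$-th point $\bz$ triggers one exclusive-hypervolume computation, which itself recurses on the projected set $N'$ of at most $i-1$ points but in dimension $p-1$. So the cost of handling the $i$-th insertion is dominated by $T(i-1, p-1)$ plus the $O(i)$ work to form the projections and filter for nondominated points; summing over $i$ gives
\begin{equation}\label{eq:wfg-recurrence}
  T(n,p) \;\leq\; \sum_{i=1}^{n} \bigl( T(i-1, p-1) + O(i) \bigr).
\end{equation}
The base case is the two-dimensional subproblem, which is handled by a simple linear-time routine after sorting, so $T(n,2) = O(n)$ (or $O(n \log n)$ if the sort is charged here, but the sort can be done once globally).

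The main step is then to solve the recurrence \eqref{eq:wfg-recurrence} and show it yields $O(n^{p-1})$. I would proceed by induction on $p$. Assuming $T(m, p-1) = O(m^{p-2})$ for all $m \leq n$, the sum $\sum_{i=1}^{n} T(i-1,p-1)$ is bounded by $O\bigl(\sum_{i=1}^n i^{p-2}\bigr) = O(n^{p-1})$, and the additive $O(i)$ terms contribute only $O(n^2)$, which is absorbed for $p \geq 3$. This closes the induction, matching the HSO bound. The anchor of the induction is $p = 2$ with $T(n,2) = O(n)$, so the exponent $p-1$ comes out as $2-1 = 1$ at the base and increases by one with each added dimension, exactly as claimed.

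The subtle point — and the step I expect to be the real obstacle — is justifying that each recursive subproblem really shrinks in dimension rather than merely in point count, and that no hidden blow-up occurs in the number or size of subproblems. Naively, the recursion branches at every point, which is what produces the authors' crude $O(2^n)$ bound; the whole improvement rests on the observation that, because of the sorting, the ``minus \WFG{$N'$}'' term is always a lower-dimensional problem and not a same-dimensional one. I would therefore argue carefully that at recursion depth $d$ the active subproblems are all of dimension $p-d$, so the recursion tree has depth at most $p-2$ before bottoming out at the two-dimensional base case, and that the total number of distinct projected point sets encountered across one full pass is polynomial rather than exponential. Concretely, I would charge the entire cost of the subtree rooted at the insertion of the $i$-th point to the single recurrence term $T(i-1,p-1)$, so the exponential branching is reorganized into the nested sums of \eqref{eq:wfg-recurrence}; making this accounting rigorous — that every unit of work is charged exactly once and that the projection/filtering overhead per node is only linear — is where the care is needed, and it is precisely this reorganization that converts the exponential bound into the polynomial one.
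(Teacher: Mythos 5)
Your overall strategy --- a recurrence $T(n,p)\leq\sum_i\bigl(T(i-1,p-1)+\text{cost of building }N'\bigr)$ solved by induction on $p$ --- is exactly the paper's, but there is a genuine gap in the cost accounting that breaks the induction at its base. You charge only $O(i)$ per insertion ``to form the projections and filter for nondominated points.'' Forming the projections is indeed linear, but extracting the nondominated subset $N'_{\rm nd}$ of up to $i$ points is not: the WFG implementation does this by pairwise comparison, and the paper accordingly charges $c_2 n^2$ for computing $N'$. (Even an optimal maxima-finding routine would cost $\Omega(i\log i)$, never $O(i)$.) With the correct quadratic term, your own recurrence gives $T(n,3)\leq\sum_i\bigl(O(i)+O(i^2)\bigr)=O(n^3)$, not the $O(n^2)$ your induction hypothesis requires at $p=3$, and the chain then yields only $T(n,p)=O(n^p)$ --- which is precisely the weaker bound the paper first derives before strengthening it.

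The missing idea is that the recursion must \emph{not} be run down to the two-dimensional base case. The paper obtains $O(n^{p-1})$ only by cutting the recursion off at $p=3$ and invoking a dedicated $O(n^2)$ (or $O(n\log n)$) three-dimensional hypervolume algorithm there; with $t(n,3)=O(n^2)$ as the anchor, the step $t(n,p)\leq\sum_i\bigl(t(i,p-1)+c_2 i^2\bigr)$ does close to $O(n^{p-1})$ for all $p\geq 4$, since the $O(i^2)$ filtering cost is then absorbed by $t(i,p-1)=O(i^{p-2})$. Your ``subtle point'' paragraph worries about the right thing for the wrong reason: the reorganization of the branching into nested sums is unproblematic and is handled correctly by the recurrence itself; the real obstacle is the per-level filtering cost and the consequent need for a stronger base case, which your argument does not supply.
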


\begin{proof}
  Let $t(n,p)$ be the worst-case time complexity of the WFG algorithm 
  as described in Algorithm~\ref{alg:wfg1}, 
  with $n=|N|$ and $n'=|N'|$. We have:
  $$t(n+1, p) \leq t(n,p) + c_1 + c_2 n^2 + t(n',p-1)$$
  where $t(n,p)$, $c_1$, and $t(n',p-1)$ are the 
  worst-case time complexities of \WFG{$N$}, \WFG{$\{\bz\}$}, and \WFG{$N'$}, respectively,
  and $c_2 n^2$ is an upper bound on the complexity of computing $N'$,
  $c_1$ and $c_2$ being constants with respect to $n$ and $n'$.  
  We have $t(n,2) = O(n)$ if the points are sorted in non-decreasing order of some component, therefore we obtain 
  $t(n,3) = O(n^3)$. It follows, since $n'\leq n$ that 
  $$t(n,p) = O(n^p)$$
  for any integers $n$ and $p$.
  If an $O(n\log n)$-algorithm \citep{BeuFonLopPaqVah09} or even just
  an $O(n^2)$-algorithm is used for the case $p = 3$, then the complexity becomes
  $$t(n,p) = O(n^{p-1})$$
  for any integers $n$ and $p\geq 3$.
\end{proof}

Now we prove a non-trivial lower bound in Proposition~\ref{prop:wfg-complexity1}

\begin{proposition}\label{prop:wfg-complexity1}
  The worst-case complexity of the WFG algorithm is 
  $\Omega (n^{\lfloor \frac{p}{2}\rfloor} \log n)$
  for $p \geq 4$.
\end{proposition}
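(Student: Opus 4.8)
The plan is to establish the lower bound by constructing a family of worst-case instances on which the WFG recursion is forced to generate subproblems whose total size grows like $n^{\lfloor p/2\rfloor}\log n$. The key structural observation is that the WFG algorithm, when processing the points in non-decreasing order of component $p$, invokes at the $k$-th step a $(p-1)$-dimensional subproblem on the set $N'=\{{\rm pmax}(\bz,z):z\in N\}_{\rm nd}$, where $N$ consists of the first $k-1$ points already processed. The size $|N'|$ is governed by how many of the projected points remain mutually nondominated in dimension $p-1$. Thus the total work is at least the sum, over all insertion steps, of the cost of solving these $(p-1)$-dimensional subproblems, and I would lower-bound that cost recursively.

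First I would treat the base of the recursion: the case $p=4$, where the subproblems are $3$-dimensional. For $p=4$ I would exhibit an explicit point configuration on which, at a constant fraction of the $n$ insertion steps, the projected set $N'$ has size $\Theta(n)$ and is arranged so that even the optimal tri-objective hypervolume computation cannot be done faster than $\Omega(n\log n)$ — invoking the $\Theta(n\log n)$ complexity of hypervolume computation \citep{BeuFonLopPaqVah09} together with a matching lower-bound instance that forces the sorting-type $\log n$ factor. Summing $\Omega(n\log n)$ over $\Theta(n)$ steps gives $\Omega(n^2\log n)=\Omega(n^{\lfloor 4/2\rfloor}\log n)$, matching the claim for $p=4$.

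Then I would set up an inductive step that lifts the $p=4$ construction to general $p$. Given a hard instance $N^{(p-2)}$ of size $m$ in dimension $p-2$ witnessing the bound $\Omega(m^{\lfloor(p-2)/2\rfloor}\log m)$, I would build an instance in dimension $p$ of size $n=\Theta(m)$ by taking a ``product'' construction: arrange the points so that a linear number of the WFG insertion steps each spawns a $(p-1)$-dimensional subproblem that itself contains an embedded copy of a $(p-2)$-dimensional hard instance of size $\Theta(n)$. Since $\lfloor p/2\rfloor=\lfloor(p-2)/2\rfloor+1$, multiplying the per-step cost $\Omega(n^{\lfloor(p-2)/2\rfloor}\log n)$ by the $\Theta(n)$ steps yields $\Omega(n^{\lfloor p/2\rfloor}\log n)$, closing the induction (treating even and odd $p$ separately because $\lfloor\cdot\rfloor$ behaves differently, but the recursion $\lfloor p/2\rfloor=\lfloor(p-2)/2\rfloor+1$ holds in both parities).

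The main obstacle I anticipate is the geometric design of the point configuration: I must ensure simultaneously that (i) each of a linear number of insertion steps genuinely produces a large, ``hard'' projected subproblem rather than a trivially dominated one, and (ii) the projections ${\rm pmax}(\bz,z)$ preserve the embedded lower-dimensional hard instance without collapsing its nondominated structure. Controlling the interaction between the sorting order on component $p$ and the dominance relations among the projected points — so that the recursion cannot short-circuit by early domination — is the delicate part, and verifying that the size bounds $|N'|=\Theta(n)$ hold at enough steps is where the real work lies; the arithmetic of summing the recursion is routine once the construction is in place.
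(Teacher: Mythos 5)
Your overall strategy coincides with the paper's: construct a recursive family of instances on which a linear number of WFG insertion steps each spawns a subproblem of linear size that is itself a hard instance two dimensions lower, so that every level of the dimension-reduction contributes a factor of $n$ and the base contributes the $\log n$. The gap is that the explicit construction --- which you yourself flag as ``where the real work lies'' --- is the entire content of the proposition and is absent from your proposal. The paper supplies it: let $A_k$ be the $k\times 2$ matrix with first column $(k,k-1,\dots,1)^{\intercal}$ and second column $(1,2,\dots,k)^{\intercal}$, and take as instance the $n=\frac{p}{2}k$ rows of the block-diagonal matrix $M_{k,p}$ having $\frac{p}{2}$ copies of $A_k$ on the diagonal and zeros elsewhere. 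Because each point is supported on only two coordinates, for $\bz$ in the last block and $z$ in an earlier block the operation ${\rm pmax}(\bz,z)$ merely overwrites the two zero coordinates $z_{p-1},z_p$ with $\bz_{p-1},\bz_p$ and leaves the rest of $z$ untouched; hence $N'_{\rm nd}$ is, up to one additional point and two constant trailing coordinates, exactly the row set of $M_{k,p-2}$, which yields the recursion $t(n,p)\ge k\cdot t\bigl((\frac{p}{2}-1)k,\,p-2\bigr)$. This block-diagonal zero pattern is precisely what guarantees your conditions (i) and (ii) --- that enough steps produce large subproblems and that the projections do not collapse the embedded nondominated structure --- and finding it is not routine.

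Your base case is also not sound as stated. The $\Omega(n\log n)$ bound of Beume et al.\ is a lower bound on the complexity of the \emph{problem} in a computational model; it does not imply that the WFG algorithm itself spends $\Omega(m\log m)$ on the particular three-dimensional sets $N'$ that arise in your construction. A lower bound on a specific algorithm must count the operations that algorithm actually performs on the given inputs. The paper sidesteps this by unrolling the recursion all the way to $p=2$, where WFG's two-objective base case on $m$ points costs $\Theta(m\log m)$, and handles odd $p$ by padding $M_{k,p-1}$ with a zero column. If you stop the recursion at $p=3$, you incur the extra (and avoidable) obligation of proving that the embedded tri-objective instances are hard for WFG's own tri-objective routine, not merely for some hypothetical optimal algorithm.
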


\begin{proof}
Let $A_k = \begin{pmatrix}
k & k-1 & \cdots & 1\\
1 & 2   & \cdots & k
\end{pmatrix}^{\intercal}$
and $\Bzero_k$ be a null matrix of the same order as $A_k$.
For any $k \geq 1$ and any even $p \geq 4$
we define the following $\left(\frac{p}{2}k \times p\right)$-dimensional
block-matrix:
\begin{equation}
M_{k,p} =
\begin{pmatrix}
  A_k      & \Bzero_k  & \cdots & \Bzero_k  \\
  \Bzero_k & \ddots    & \ddots & \vdots  \\
  \vdots   & \ddots    & \ddots & \Bzero_k  \\
  \Bzero_k & \cdots    & \Bzero_k  & A_k
\end{pmatrix}\label{eq:simple_wc}
\end{equation}
and consider the $p$-dimensional instance 
where the points are the $n=\frac{p}{2}k$ rows of $M_{k,p}$.
Note that the rows are already sorted in non-decreasing order of component $p$.
Consider any point $\bz$ taken among the last $k$ rows of $M_{k,p}$,
denoted by $\bz^1, \dots, \bz^k$,
and the computation of \WFG{$N\cup \{\bz\}$},
where $N$ is the set of all rows above $\bz$ in $M_{k,p}$.
The set $N'$ involved in this computation contains
all the first $(\frac{p}{2}-1)k$ rows of $M_{k,p}$
with the last two components replaced by $\bz_{p-1}$ and $\bz_{p}$, respectively,
and, in the case $\bz \neq \bz^1$, the vector $(0, \dots, 0, \bz_{p-1}+1, \bz_p)$
(and possibly further points, depending on the value of 
$j\in\{1,\dots,k\}$ satisfying $\bz=\bz^j$, that are however dominated by 
$(0, \dots, 0, \bz_{p-1}+1, \bz_p)$ in $N'$).
If the points of $N'_{\rm nd}$ are sorted in non-decreasing order of component $p-1$,
all (case $\bz = \bz^1$) or all but the last point (case $\bz \neq \bz^1$) of $N'_{\rm nd}$ 
have identical values on components $p-1$ and $p$,
the other components being those of $M_{k,p-2}$.
Therefore, for each of these $k$ computations,
there will be a call of {\WFG} on an instance of the same structure 
with $(\frac{p}{2}-1)k$ points of dimension $p-2$.
Thus we have:
$$ t(n, p) \geq k \cdot t\left(\left(\frac{p}{2}-1\right)k, p-2\right)$$
which, given that $n = \frac{p}{2}k$, is equivalent to
$$ t(n, p) \geq \frac{2}{p}n \cdot t\left(\frac{n}{\frac{p}{2}-1}, p-2\right).$$
Given that the recursions with $p=2$ are solved in $\Theta (n \log n)$,
we obtain $t (n, p) = \Omega (n^{\frac{p}{2}} \log n)$ when $p$ is even.

For an odd $p \geq 5$, we apply the above analysis for $p-1$,
adding a zero-column to  $M_{k,p-1}$. 
We then obtain $t (n, p) = \Omega (n^{\frac{p-1}{2}} \log n)$ in this case.

Overall we have shown that $t (n, p) = \Omega (n^{\lfloor \frac{p}{2} \rfloor} \log n)$
for any $p\geq 4$.
\end{proof}

Note that we assumed in the proof of Proposition~\ref{prop:wfg-complexity1}
that, for each recursive call of {\WFG}, 
the sorted component is imposed to the algorithm.
This corresponds to the description given by the authors as well as their implementation.
Other choices for the sorted component than the one given in the proof
can make the solution 
to the instance type given in the proof a lot more efficient.

\section{Computational experiments}\label{sec:expe}

In this section, we provide the setup (Sections~\ref{sub:setup} and~\ref{sub:instances}) 
and the results (Section~\ref{sub:results}) of computational experiments
conducted to compare the efficiencies of Algorithms~\ref{alg:incremental} and~\ref{alg:nonincremental},
as well as the HV4D \citep{GueFonEmm12}, QHV \citep{RusFra14}, and WFG \citep{WhiBraBar12} algorithms
to compute the hypervolume indicator.

\subsection{Implementations and general setup}\label{sub:setup}

We implemented HBDA-I and HBDA-NI
in C.
We used for the HV4D, QHV, and WFG algorithms the implementations provided by the authors,
namely \citet[version 2.0 RC 2]{HV4D},  
\citet[retrieved on April, 2015]{QHV}, 
and \citet[version 1.10]{WFG}, respectively.
We note that, in the implementation of the QHV algorithm, 
the number of objectives is fixed at compile time,
thus the final executable may take advantage of this information,
which the other implementations do not.

From the implementation of the WFG algorithm, we derived an incremental version.
In this version, the initial set of points is not assumed to be sorted, 
thus it may not be known entirely in advance.
The auxiliary sets $N'$ (see Algorithm~\ref{alg:wfg1}) are, however, sorted 
since they are built from points for which the exclusive hypervolume has already been computed.
In other words, this corresponds to considering 
that the initial set of points is sorted according to an extra $(p+1)$-th 
component (compatible with the order in which the points are actually processed)
that is, however, not taken into account in the computation of the hypervolume.
We denote by ''WFG incremental`` this approach.

All implementations were compiled using \texttt{GCC 4.3.4} with the same options 
\texttt{-O3 -DNDEBUG -march=native}. Compilation and tests were both performed 
under SUSE Linux Enterprise Server 11 
on identical workstations equipped with four Intel Xeon E7540 CPU at 2.00GHz and with 128GB of RAM.

\subsection{Instances}\label{sub:instances}

All algorithms were tested on stable sets of points generated according to several schemes.
We considered the following four instance types:
\begin{itemize}
  \item[(C)] \emph{Concave} or so-called \emph{spherical} \citep{DebThiLauZit02} instances
  \item[(X)] \emph{Convex} instances
  \item[(L)] \emph{Linear} instances  
  \item[(H)] \emph{Hard} instances
\end{itemize}

\noindent Instances of types (C), (X), and (L) are obtained by drawing uniformly points 
from the open hypercube $(0,1)^p$. Then each point $z$ is modified as follows:
\begin{itemize}
  \item[(C)] $z_j \leftarrow \frac{z_j}{\sqrt{\sum_{k=1}^{p} z_k^2}}$, 
  for each $j\in \{1, \dots, p\}$, i.e.\ the component values of $z$ are divided by
  their $\ell_2$-norm,  \item[(X)] $z_j \leftarrow 1-\frac{z_j}{\sqrt{\sum_{k=1}^{p} z_k^2}}$,
  for each $j\in \{1, \dots, p\}$,
  \item[(L)] $z_j \leftarrow \frac{z_j}{\sum_{k=1}^{p} z_k}$, 
  for each $j\in \{1, \dots, p\}$, i.e.\ the component values of $z$ are divided by
  their $\ell_1$-norm.\end{itemize}

\noindent Note, in particular, that for type (C) and (X) instances, we followed the suggestion of \citet{RusFra14}
to project uniformly distributed points on a hypersphere, instead of using the instances of \citet{DebThiLauZit02}.

\noindent The points of the instances of types (C), (X), and (L) are randomly distributed on some subset of 
$(0,1)^p$, namely 
$S(\Bone,  1) \cap (0,1)^p$ for type (C),  
$S(\Bzero, 0) \cap (0,1)^p$ for type (X), and
$\{z\in \R^p: \sum_{j=1}^p z_j = 1\}\cap (0,1)^p$ for type (L), 
where $S(z,r)$ denotes a hypersphere of $\R^p$ with center $z\in\R^p$ and radius $r\in \R^+$.

Instances of type (H) are motivated by the instance type built 
to derive a lower bound on the worst-case complexity of the WFG algorithm
in Section~\ref{sub:wfg_complex} (Equation~\ref{eq:simple_wc}).
We slightly modified these instances for the computational experiments
since the presence of many identical component values
could perturb the comparison depending on the way algorithms handle this case.
Therefore, we defined the following instance type in general position
which yields the same lower bound on the worst-case complexity of the WFG algorithm.
Let $A_{k,l} = \begin{pmatrix}
k+lk & k-1+lk & \cdots & 1+lk\\
1+lk & 2+lk   & \cdots & k+lk
\end{pmatrix}^{\intercal}$.
For any even $p$ and $k$, a hard instance 
is defined by the set of all rows of the following block matrix:
$$ M'_{k,p} =
\begin{pmatrix}
  A_{k,\frac{p}{2}-1} & A_{k,\frac{p}{2}-2} & \cdots & A_{k,1}             & A_{k,0} \\
  A_{k,0}             & A_{k,\frac{p}{2}-1} & \ddots & \vdots              & \vdots \\
  \vdots              & A_{k,0}             & \ddots & A_{k,\frac{p}{2}-2} & \vdots \\
  \vdots              & \vdots              & \ddots & A_{k,\frac{p}{2}-1} & A_{k,\frac{p}{2}-2} \\
  A_{k,\frac{p}{2}-2} & A_{k,\frac{p}{2}-3} & \cdots & A_{k,0} & A_{k,\frac{p}{2}-1}
\end{pmatrix}$$
and consists of $\frac{p}{2}k$ points of dimension $p$.
Matrices $A_{k,l}$ have the same property of matrices $A_k$,
which is to have decreasing coefficients on the first column and increasing in the second column.
Moreover, $M'_{k,p}$ is also built by repeating along the diagonal
the same matrix, which has strictly larger coefficients
than the other matrices of the block matrix.

Some implementations of the algorithms we consider in this section 
have certain restrictions on the instances that can be solved. 
We summarize these restrictions in Table~\ref{tab:restrictions}.

\begin{table}
  \begin{center}
    \setlength{\tabcolsep}{5pt}
    \resizebox{\textwidth}{!}{\begin{tabular}{lcccc}
      \hline
      Algorithm          & Optimization direction & Coordinates range & Ref. point & \# points\\\hline
      HBDA-I, HBDA-NI  & \emph{any}             & \emph{any}        & \emph{any} & \emph{any} \\
      HV4D               & minimize               & \emph{any}        & \emph{any} & \emph{any} \\
      QHV                & maximize               & $[0,1]$           & $\Bone$    & $\leq 1\,000$ \\
      WFG                & maximize               & \emph{any}        & \emph{any} & \emph{any} \\\hline
    \end{tabular}}
  \end{center}

  \caption{Restrictions of the tested implementations of the hypervolume algorithms considered for computational experiments\label{tab:restrictions}}
\end{table}

Because of the restrictions of the implementation of the QHV algorithm, 
we normalized the hard instances so that the points are all in 
the open hypercube $(0,1)^p$.
Moreover, for all instances, we chose as reference point the all-ones vector 
and the null vector in the minimization and maximization cases, respectively.
To cope with the restrictions on the optimization direction, 
we generated instances primarily for the minimization case, 
and for each instance, a symmetric instance with points in $(0,1)^p$ for the maximization case,
by taking for each point $z$ and component $j$ the complement $1-z_j$.

For types (C), (X), and (L) we generated instances for each $p \in \{4, \dots, 10\}$
and $n \in \{100, 200, \dots, 1\,000\}$. For type (H) we considered 
$p \in \{4, 6, 8, 10\}$ and for each value of $p$ we chose 10 values for the parameter $k$
so as to obtain 10 sizes of instances.
Since for any algorithm, type (H) instances are significantly harder to solve 
than the other types considered in this paper,
we limited the maximal number of points to 1\,000, 900, 300, and 150 
for $p = 4, 6, 8, 10$, respectively.

For a fixed type, number of objectives, and number of points, we generated 10 instances. 
The implementations were run up to 100 times on small instances to obtain significant
computation times. The results we report are thus averaged.

\subsection{Results}\label{sub:results}

Figures~\ref{fig:res_concave}, \ref{fig:res_convex}, \ref{fig:res_linear}, and~\ref{fig:res_wfg_hard} 
show computation times for nonincremental algorithms 
on instances of type (C), (X), (L) and (H), respectively.

\begin{figure}
  \begin{center}\rotatebox[origin=c]{90}{\hfill\footnotesize Computation time (seconds)}
    \begin{tabular}{rr}
    \includegraphics{hypervolume-arxiv-figure\thepiccnt.pdf}\stepcounter{piccnt} &
    \includegraphics{hypervolume-arxiv-figure\thepiccnt.pdf}\stepcounter{piccnt} \\
    \includegraphics{hypervolume-arxiv-figure\thepiccnt.pdf}\stepcounter{piccnt} &
    \includegraphics{hypervolume-arxiv-figure\thepiccnt.pdf}\stepcounter{piccnt} \\
    \includegraphics{hypervolume-arxiv-figure\thepiccnt.pdf}\stepcounter{piccnt} &
    \includegraphics{hypervolume-arxiv-figure\thepiccnt.pdf}\stepcounter{piccnt} \\
    \includegraphics{hypervolume-arxiv-figure\thepiccnt.pdf}\stepcounter{piccnt} &
    \end{tabular}
    
    {\footnotesize Number of points}
    
    \medskip
    \includegraphics{hypervolume-arxiv-figure_crossref\thelegendcnt.pdf}
    \stepcounter{legendcnt}
  \end{center}
\caption{Nonincremental algorithms on type (C) instances \label{fig:res_concave}}
\end{figure}

\begin{figure}
  \begin{center}\rotatebox[origin=c]{90}{\hfill\footnotesize Computation time (seconds)}
    \begin{tabular}{rr}
    \includegraphics{hypervolume-arxiv-figure\thepiccnt.pdf}\stepcounter{piccnt} &
    \includegraphics{hypervolume-arxiv-figure\thepiccnt.pdf}\stepcounter{piccnt} \\
    \includegraphics{hypervolume-arxiv-figure\thepiccnt.pdf}\stepcounter{piccnt} &
    \includegraphics{hypervolume-arxiv-figure\thepiccnt.pdf}\stepcounter{piccnt} \\
    \includegraphics{hypervolume-arxiv-figure\thepiccnt.pdf}\stepcounter{piccnt} &
    \includegraphics{hypervolume-arxiv-figure\thepiccnt.pdf}\stepcounter{piccnt} \\
    \includegraphics{hypervolume-arxiv-figure\thepiccnt.pdf}\stepcounter{piccnt} &
    
    \end{tabular}
    
    {\footnotesize Number of points}
    
    \medskip
    \includegraphics{hypervolume-arxiv-figure_crossref\thelegendcnt.pdf}
    \stepcounter{legendcnt}
  \end{center}
\caption{Nonincremental algorithms on type (X) instances\label{fig:res_convex}}
\end{figure}

\begin{figure}
  \begin{center}\rotatebox[origin=c]{90}{\hfill\footnotesize Computation time (seconds)}
    \begin{tabular}{rr}
    \includegraphics{hypervolume-arxiv-figure\thepiccnt.pdf}\stepcounter{piccnt} &
    \includegraphics{hypervolume-arxiv-figure\thepiccnt.pdf}\stepcounter{piccnt} \\
    \includegraphics{hypervolume-arxiv-figure\thepiccnt.pdf}\stepcounter{piccnt} &
    \includegraphics{hypervolume-arxiv-figure\thepiccnt.pdf}\stepcounter{piccnt} \\
    \includegraphics{hypervolume-arxiv-figure\thepiccnt.pdf}\stepcounter{piccnt} &
    \includegraphics{hypervolume-arxiv-figure\thepiccnt.pdf}\stepcounter{piccnt} \\
    \includegraphics{hypervolume-arxiv-figure\thepiccnt.pdf}\stepcounter{piccnt} &
    \end{tabular}
    
    {\footnotesize Number of points}
    
    \medskip
    \includegraphics{hypervolume-arxiv-figure_crossref\thelegendcnt.pdf}
    \stepcounter{legendcnt}
  \end{center}
\caption{Nonincremental algorithms on type (L) instances \label{fig:res_linear}}
\end{figure}

\begin{figure}
  \begin{center}\rotatebox[origin=c]{90}{\hfill\footnotesize Computation time (seconds)}
    \begin{tabular}{rr}
    \includegraphics{hypervolume-arxiv-figure\thepiccnt.pdf}\stepcounter{piccnt} &
    \includegraphics{hypervolume-arxiv-figure\thepiccnt.pdf}\stepcounter{piccnt} \\
    \includegraphics{hypervolume-arxiv-figure\thepiccnt.pdf}\stepcounter{piccnt} &
    \includegraphics{hypervolume-arxiv-figure\thepiccnt.pdf}\stepcounter{piccnt}
    \end{tabular}
    
    {\footnotesize Number of points}
    
    \medskip
    \includegraphics{hypervolume-arxiv-figure_crossref\thelegendcnt.pdf}
    \stepcounter{legendcnt}
  \end{center}
\caption{Nonincremental algorithms on type (H) instances\label{fig:res_wfg_hard}}
\end{figure}

Our approach HBDA-NI performs better 
than all other algorithms on types (C), (X), and (L), 
for $p \in \{5, 6, 7\}$, and on type (H) for all the tested dimensions above 4.
For 4-dimensional instances of any type, 
it is confirmed that the HV4D algorithm
performs the best while the computation times obtained with HBDA-NI
are very close.
HBDA-NI is still faster than the QHV algorithm for $p \in \{8, 9, 10\}$
on concave instances.

We also show computation times for the incremental algorithms (HBDA-I
and the incremental implementation of the WFG algorithm)
in Figures~\ref{fig:res_incr_concave}, \ref{fig:res_incr_convex}, \ref{fig:res_incr_linear}, and~\ref{fig:res_incr_wfg_hard}.

\begin{figure}
  \begin{center}\rotatebox[origin=c]{90}{\hfill\footnotesize Computation time (seconds)}
    \begin{tabular}{rr}
    \includegraphics{hypervolume-arxiv-figure\thepiccnt.pdf}\stepcounter{piccnt} &
    \includegraphics{hypervolume-arxiv-figure\thepiccnt.pdf}\stepcounter{piccnt} \\
    \includegraphics{hypervolume-arxiv-figure\thepiccnt.pdf}\stepcounter{piccnt} &
    \includegraphics{hypervolume-arxiv-figure\thepiccnt.pdf}\stepcounter{piccnt}
    \end{tabular}
    
    {\footnotesize Number of points}
    
    \medskip
    \includegraphics{hypervolume-arxiv-figure_crossref\thelegendcnt.pdf}
    \stepcounter{legendcnt}
  \end{center}
\caption{Incremental algorithms on type (C) instances \label{fig:res_incr_concave}}
\end{figure}

\begin{figure}
  \begin{center}\rotatebox[origin=c]{90}{\hfill\footnotesize Computation time (seconds)}
    \begin{tabular}{rr}
    \includegraphics{hypervolume-arxiv-figure\thepiccnt.pdf}\stepcounter{piccnt} &
    \includegraphics{hypervolume-arxiv-figure\thepiccnt.pdf}\stepcounter{piccnt} \\
    \includegraphics{hypervolume-arxiv-figure\thepiccnt.pdf}\stepcounter{piccnt} &
    \includegraphics{hypervolume-arxiv-figure\thepiccnt.pdf}\stepcounter{piccnt}
    \end{tabular}
    
    {\footnotesize Number of points}
    
    \medskip
    \includegraphics{hypervolume-arxiv-figure_crossref\thelegendcnt.pdf}
    \stepcounter{legendcnt}
  \end{center}
\caption{Incremental algorithms on type (X) instances \label{fig:res_incr_convex}}
\end{figure}

\begin{figure}
  \begin{center}\rotatebox[origin=c]{90}{\hfill\footnotesize Computation time (seconds)}
    \begin{tabular}{rr}
    \includegraphics{hypervolume-arxiv-figure\thepiccnt.pdf}\stepcounter{piccnt} &
    \includegraphics{hypervolume-arxiv-figure\thepiccnt.pdf}\stepcounter{piccnt} \\
    \includegraphics{hypervolume-arxiv-figure\thepiccnt.pdf}\stepcounter{piccnt} &
    \includegraphics{hypervolume-arxiv-figure\thepiccnt.pdf}\stepcounter{piccnt}
    \end{tabular}
    
    {\footnotesize Number of points}
    
    \medskip
    \includegraphics{hypervolume-arxiv-figure_crossref\thelegendcnt.pdf}
    \stepcounter{legendcnt}
  \end{center}
\caption{Incremental algorithms on type (L) instances \label{fig:res_incr_linear}}
\end{figure}

\begin{figure}
  \begin{center}\rotatebox[origin=c]{90}{\hfill\footnotesize Computation time (seconds)}
    \begin{tabular}{rr}
    \includegraphics{hypervolume-arxiv-figure\thepiccnt.pdf}\stepcounter{piccnt} &
    \includegraphics{hypervolume-arxiv-figure\thepiccnt.pdf}\stepcounter{piccnt} \\
    \includegraphics{hypervolume-arxiv-figure\thepiccnt.pdf}\stepcounter{piccnt} &
    \end{tabular}
    
    {\footnotesize Number of points}
    
    \medskip
    \includegraphics{hypervolume-arxiv-figure_crossref\thelegendcnt.pdf}
    \stepcounter{legendcnt}
  \end{center}
\caption{Incremental algorithms on type (H) instances \label{fig:res_incr_wfg_hard}}
\end{figure}

According to these results, the incremental WFG algorithm
performs significantly better than Algorithm~\ref{alg:incremental} for almost all instances
except on 6 and 8 objectives hard instances, where both algorithms behave similarly.
The relative poorer efficiency of our incremental approach 
can be explained by the fact that its implementation lacks
an efficient data structure to identify local upper bounds
strictly dominated by the point that is currently processed.
Also the incremental WFG algorithm is still able to reduce the dimension of the points in subproblems.

\section{Conclusion}\label{sec:concl}

In this paper, we investigated a new way of computing the hypervolume indicator 
by calculating a partition of the dominated region into hyperrectangles.
This decomposition is based on the computation of local upper bounds. 
We proposed an incremental and a nonincremental approach.
These approaches provide a good worst-case complexity
and the nonincremental version, through an efficient implementation
is very competitive in practice.
In fact, we demonstrate that computing explicitly the dominated region,
in the sense of computing all its vertices, i.e.\ all local upper bounds 
additionally to feasible points,
is an interesting approach with respect to the computation time.

Future work includes improving the incremental approach.
This could be done by implementing an efficient dynamic data structure to identify
the local upper bounds that have to be updated or removed when a new point is considered,
which is subject to current work.
Alternatively, a special property of the dominated region
such as the neighborhood relation between local upper bounds 
elaborated in \cite{DaeKlaLacVan15} could be exploited.

It would also be interesting to refine the analysis 
of the worst-case time complexity of the WFG algorithm,
because the gap between the lower and upper bounds we showed is still large.
Finally, one could think of using the concept of local upper bounds 
and the associated decomposition of the dominated region
to compute hypervolume contributions.

\bibliographystyle{abbrvnat}
\bibliography{hypervolume-arxiv}

\begin{thebibliography}{21}
\providecommand{\natexlab}[1]{#1}
\providecommand{\url}[1]{\texttt{#1}}
\expandafter\ifx\csname urlstyle\endcsname\relax
  \providecommand{\doi}[1]{doi: #1}\else
  \providecommand{\doi}{doi: \begingroup \urlstyle{rm}\Url}\fi

\bibitem[Beume et~al.(2007)Beume, Naujoks, and Emmerich]{BeuNauEmm07}
N.~Beume, B.~Naujoks, and M.~Emmerich.
\newblock {SMS-EMOA: Multiobjective selection based on dominated hypervolume}.
\newblock \emph{{European Journal of Operational Research}}, 181\penalty0
  (3):\penalty0 1653--1669, 2007.
\newblock \doi{10.1016/j.ejor.2006.08.008}.

\bibitem[Beume et~al.(2009)Beume, Fonseca, L\'{o}pez-Ib\'{a}\~{n}ez, Paquete,
  and Vahrenhold]{BeuFonLopPaqVah09}
N.~Beume, C.~M. Fonseca, M.~L\'{o}pez-Ib\'{a}\~{n}ez, L.~Paquete, and
  J.~Vahrenhold.
\newblock {On the complexity of computing the hypervolume indicator}.
\newblock \emph{{Trans. Evol. Comp}}, 13\penalty0 (5):\penalty0 1075--1082,
  2009.
\newblock \doi{10.1109/TEVC.2009.2015575}.

\bibitem[Bringmann and Friedrich(2010)]{BriFri10}
K.~Bringmann and T.~Friedrich.
\newblock {Approximating the volume of unions and intersections of
  high-dimensional geometric objects}.
\newblock \emph{{Computational Geometry}}, 43\penalty0 (6{--}7):\penalty0
  601--610, 2010.
\newblock \doi{10.1016/j.comgeo.2010.03.004}.

\bibitem[Chan(2013)]{Cha13}
T.~M. Chan.
\newblock {Klee{'}s Measure Problem Made Easy}.
\newblock In \emph{{Foundations of Computer Science (FOCS), 2013 IEEE 54th
  Annual Symposium on}}, pages 410--419, 2013.
\newblock \doi{10.1109/FOCS.2013.51}.

\bibitem[D{\"{a}}chert and Klamroth(2015)]{DaeKla14}
K.~D{\"{a}}chert and K.~Klamroth.
\newblock {A linear bound on the number of scalarizations needed to solve
  discrete tricriteria optimization problems}.
\newblock \emph{{Journal of Global Optimization}}, 61\penalty0 (4):\penalty0
  643--676, 2015.
\newblock \doi{10.1007/s10898-014-0205-z}.

\bibitem[D\"achert et~al.(2015)D\"achert, Klamroth, Lacour, and
  Vanderpooten]{DaeKlaLacVan15}
K.~D\"achert, K.~Klamroth, R.~Lacour, and D.~Vanderpooten.
\newblock {Efficient computation of the search region in multi-objective
  optimization}.
\newblock Technical report, {University of Wuppertal}, 2015.

\bibitem[de~Berg et~al.(2008)de~Berg, Cheong, van Kreveld, and
  Overmars]{deBChevanOve08}
M.~de~Berg, O.~Cheong, M.~van Kreveld, and M.~Overmars.
\newblock \emph{{Computational Geometry: Algorithms and Applications}}.
\newblock {Springer}, {Santa Clara, CA, USA}, {3rd} edition, 2008.

\bibitem[Deb et~al.(2002)Deb, Thiele, Laumanns, and Zitzler]{DebThiLauZit02}
K.~Deb, L.~Thiele, M.~Laumanns, and E.~Zitzler.
\newblock {Scalable multi-objective optimization test problems}.
\newblock In \emph{{Evolutionary Computation, 2002. CEC {'}02. Proceedings of
  the 2002 Congress on}}, volume~1, pages 825--830, 2002.
\newblock \doi{10.1109/CEC.2002.1007032}.

\bibitem[Fonseca et~al.(2010)Fonseca, Paquete, Lopez-Ibanez, and
  Guerreiro]{HV4D}
C.~M. Fonseca, L.~Paquete, M.~Lopez-Ibanez, and A.~P. Guerreiro.
\newblock Computation of the {Hypervolume Indicator}.
\newblock \url{http://iridia.ulb.ac.be/~manuel/hypervolume}, 2010.

\bibitem[Guerreiro et~al.(2012)Guerreiro, Fonseca, and Emmerich]{GueFonEmm12}
A.~P. Guerreiro, C.~M. Fonseca, and M.~T. Emmerich.
\newblock {A fast dimension-sweep algorithm for the hypervolume indicator in
  four dimensions}.
\newblock In \emph{{Proceedings of the 24th Canadian Conference on
  Computational Geometry (CCCG 2012)}}, pages 77--82, 2012.

\bibitem[Kaplan et~al.(2008)Kaplan, Rubin, Sharir, and Verbin]{KapRubShaVer08}
H.~Kaplan, N.~Rubin, M.~Sharir, and E.~Verbin.
\newblock {Efficient Colored Orthogonal Range Counting}.
\newblock \emph{{SIAM Journal on Computing}}, 38\penalty0 (3):\penalty0
  982--1011, 2008.
\newblock \doi{10.1137/070684483}.

\bibitem[Klamroth et~al.(2015)Klamroth, Lacour, and Vanderpooten]{KlaLacVan15}
K.~Klamroth, R.~Lacour, and D.~Vanderpooten.
\newblock {On the representation of the search region in multi-objective
  optimization}.
\newblock \emph{{European Journal of Operational Research}}, 245\penalty0
  (3):\penalty0 767--778, 2015.
\newblock \doi{10.1016/j.ejor.2015.03.031}.

\bibitem[Nowak et~al.(2014)Nowak, M\"artens, and Izzo]{NowMaeIzz14}
K.~Nowak, M.~M\"artens, and D.~Izzo.
\newblock {Empirical Performance of the Approximation of the Least Hypervolume
  Contributor}.
\newblock In T.~Bartz-Beielstein, J.~Branke, B.~Filipi{\v{c}}, and J.~Smith,
  editors, \emph{{Parallel Problem Solving from Nature {--} PPSN XIII}}, volume
  8672, chapter {Lecture Notes in Computer Science}, pages 662--671. {Springer
  International Publishing}, 2014.
\newblock \doi{10.1007/978-3-319-10762-2_65}.

\bibitem[Przybylski et~al.(2010)Przybylski, Gandibleux, and
  Ehrgott]{PrzGanEhr10}
A.~Przybylski, X.~Gandibleux, and M.~Ehrgott.
\newblock {A two phase method for multi-objective integer programming and its
  application to the assignment problem with three objectives}.
\newblock \emph{{Discrete Optimization}}, 7\penalty0 (3):\penalty0 149--165,
  2010.
\newblock \doi{10.1016/j.disopt.2010.03.005}.

\bibitem[Russo and Francisco(2013)]{QHV}
L.~M.~S. Russo and A.~P. Francisco.
\newblock {Quick Hypervolume}.
\newblock \url{http://web.tecnico.ulisboa.pt/luis.russo/QHV/#down}, 2013.

\bibitem[Russo and Francisco(2014)]{RusFra14}
L.~M.~S. Russo and A.~P. Francisco.
\newblock {Quick Hypervolume}.
\newblock \emph{{Evolutionary Computation, IEEE Transactions on}}, 18\penalty0
  (4):\penalty0 481--502, 2014.
\newblock \doi{10.1109/TEVC.2013.2281525}.

\bibitem[Wagner et~al.(2007)Wagner, Beume, and Naujoks]{WagBeuNau07}
T.~Wagner, N.~Beume, and B.~Naujoks.
\newblock {Pareto-, Aggregation-, and Indicator-Based Methods in Many-Objective
  Optimization}.
\newblock In S.~Obayashi, K.~Deb, C.~Poloni, T.~Hiroyasu, and T.~Murata,
  editors, \emph{{Evolutionary Multi-Criterion Optimization}}, volume 4403,
  chapter {Lecture Notes in Computer Science}, pages 742--756. {Springer Berlin
  Heidelberg}, 2007.
\newblock \doi{10.1007/978-3-540-70928-2_56}.

\bibitem[While et~al.(2006)While, Hingston, Barone, and Huband]{WhiHinBarHub06}
L.~While, P.~Hingston, L.~Barone, and S.~Huband.
\newblock {A faster algorithm for calculating hypervolume}.
\newblock \emph{{Evolutionary Computation, IEEE Transactions on}}, 10\penalty0
  (1):\penalty0 29--38, 2006.
\newblock \doi{10.1109/TEVC.2005.851275}.

\bibitem[While et~al.(2012)While, Bradstreet, and Barone]{WhiBraBar12}
L.~While, L.~Bradstreet, and L.~Barone.
\newblock {A Fast Way of Calculating Exact Hypervolumes}.
\newblock \emph{{Evolutionary Computation, IEEE Transactions on}}, 16\penalty0
  (1):\penalty0 86--95, 2012.
\newblock \doi{10.1109/TEVC.2010.2077298}.

\bibitem[While et~al.(2014)While, Bradstreet, and Barone]{WFG}
L.~While, L.~Bradstreet, and L.~Barone.
\newblock {Walking Fish Group: hypervolume project}.
\newblock \url{http://www.wfg.csse.uwa.edu.au/hypervolume/}, 2014.

\bibitem[Zitzler and Thiele(1998)]{ZitThi98}
E.~Zitzler and L.~Thiele.
\newblock {Multiobjective optimization using evolutionary algorithms {---} A
  comparative case study}.
\newblock In A.~Eiben, T.~B{\"{a}}ck, M.~Schoenauer, and H.-P. Schwefel,
  editors, \emph{{Parallel Problem Solving from Nature {---} PPSN V}}, volume
  1498, chapter {Lecture Notes in Computer Science}, pages 292--301. {Springer
  Berlin Heidelberg}, 1998.
\newblock \doi{10.1007/BFb0056872}.

\end{thebibliography}
\end{document}